\newtheorem{lemma}{Lemma}
\newtheorem{theorem}{Theorem}
\newtheorem{coro}{Corollary}
\theoremstyle{definition}
\newtheorem{defi}{Definition}
\theoremstyle{remark}
\newtheorem{remark}{Remark}
\def\Ha{\mathcal H}
\def\Te{\mathcal T}
\def\Ka{\mathcal K}
\def \Tr{\mathrm{Tr}\,}
\def\Se {\mathcal S}
\def\supp{\mathrm{supp}}
\def\<{\langle\,}
\def\>{\,\rangle}
\title{Recoverability of quantum channels via hypothesis testing}
\author{Anna Jen\v cov\'a\footnote{Mathematical Institute, Slovak Academy of Sciences, \v
Stef\'anikova 49, 814 73 Bratislava, Slovakia, jenca@mat.savba.sk}}
\date{}
\begin{document}

\maketitle

\abstract{A quantum channel is sufficient with respect to a set of input states if it can
be reversed on this set. In the approximate version, the input states can be recovered
within an error bounded by the decrease of the relative entropy under the channel.  
Using a new integral representation of the relative entropy in
[arxiv:2208.12194], we present an easy proof of a characterization  of sufficient quantum channels
and recoverability by preservation of optimal success probabilities in hypothesis testing
problems, equivalently, by preservation of $L_1$-distance.}

\section{Introduction}

One of the fundamental properties of quantum relative entropy is monotonicity under
quantum channels, or the data processing inequality (DPI):
\begin{equation}\label{eq:dpi}
D(\Phi(\rho)\|\Phi(\sigma))\le D(\rho\|\sigma)
\end{equation}
for any pair of quantum states $\rho,\sigma$ and any completely positive  trace preserving
map, \cite{lindblad1975completely,uhlmann1977relative}. The DPI implies other important quantum
entropic inequalities, such as the Holevo bound \cite{holevo1973bounds}, strong subadditivity of
von Neumann entropy (SSA) \cite{lieb1973proof} or the  joint convexity of relative
entropy. In fact, SSA, joint convexity and DPI are all equivalent, see
\cite{ruskai2002inequalities} and the proof of DPI in 
\cite{lindblad1975completely} is based on the SSA.

The question when the data processing inequality becomes an equality for a completely
positive map and a pair of states was first
answered by Petz \cite{petz1986sufficient,petz1988sufficiency}, who proved that, provided the
relative entropy $D(\rho\|\sigma)$ is finite, equality occurs if and only if the two
states can be fully recovered. This means that there exists a channel $\Psi$ such that 
$\Psi\circ\Phi(\rho)=\rho$ and $\Psi\circ\Phi(\sigma)=\sigma$. In this case, we say that
the channel $\Phi$ is sufficient with respect to the pair of states
$\{\rho,\sigma\}$, in analogy with the classical notion of a statistic sufficient with
respect to a family of probability distributions. Moreover, Petz proved that there exists
a universal recovery channel $\Phi_\sigma$, such that
$\Phi_\sigma\circ\Phi(\sigma)=\sigma$ and we have $\Phi_\sigma\circ \Phi(\rho)=\rho$ if
and only if the channel $\Phi$ is sufficient with respect to $\{\rho,\sigma\}$. 

Sufficiency, or sometimes called reversibility, of channels was studied in a number of
subsequent works and several characterizations and applications were found,
\cite{mosonyi2004structure,jencova2006sufficiencyfirst, jencova2006sufficiency,
hiai2011quantum, jencova2012reversibility, 
shirokov2013reversibilityjmp, luczak2014quantum}. In particular, equality
conditions for various forms of DPI were studied, e.g.  \cite{ruskai2002inequalities,
jencova2010aunified, leditzky2017data}, and their relation to sufficiency were examined for other information
theoretic or statistical quantities, such as  different versions of quantum
$f$-divergences \cite{hiai2017different, hiai2021quantum}, quantum R\'enyi relative
entropies \cite{jencova2017preservation,jencova2018renyi,jencova2021renyi,hiai2021quantum}, Holevo quantity \cite{shirokov2013reversibility}, Fisher information
and $L_1$-distance \cite{jencova2012reversibility}.

An approximate version of sufficiency, called (approximate) recoverability is a much
stronger result stating that if the decrease in the relative entropy is small, there
exists a channel that recovers $\sigma$  perfectly while $\rho$ is recovered up to a small error. First result of this form was proved in
the work of Fawzi and Renner \cite{fawzi2015quantum}, who considered approximate quantum
Markov chains. This was soon extended to more general channels
\cite{wilde2015recoverability,sutter2016strengthened, sutter2017multivariate,
junge2018universal} and a variety of quantities such as $f$-divergences
\cite{carlen2018recovery,carlen2020recovery}, optimized $f$-divergences
\cite{gao2021recoverability} and Fisher information \cite{gao2023sufficient}. An important
result in this context is existence of an universal recovery channel $\Phi^u_\sigma$ depending only on the
state $\sigma$ such that \cite{junge2018universal}
\[
D(\rho\|\sigma)-D(\Phi(\rho)\|\Phi(\sigma))\ge -2\log F(\rho,
\Phi^u_\sigma\circ\Phi(\rho))\ge \|\rho-\Phi^u_\sigma\circ\Phi(\rho)\|_1^2,
\]
here $\|\cdot\|_1$ denotes the trace norm and
$F(\rho,\sigma)=\|\rho^{1/2}\sigma^{1/2}\|_1$ is the fidelity. See also
\cite{gao2021recoverability,faulkner2022approximate,faulkner2022approximate2} for the respective results for
normal states of von Neumann algebras.

In the simplest setting of quantum hypothesis testing, the null hypothesis $H_0=\sigma$ is tested
against the alternative $H_1=\rho$. The tests are represented by operators $0\le M\le I$,
with the interpretation that $\Tr[\omega M]$ is the probability of rejecting the
hypothesis if the true state is $\omega$.
For the  test represented by $M$, the  Bayes error probability for $\lambda\in [0,1]$ is
expressed as
\[
P_e(\lambda,\rho,\sigma,M)=\lambda\Tr[\sigma M]+(1-\lambda)\Tr[\rho(I-M)]
\]
and the test is Bayes optimal for $\lambda$ if this error probability is minimal over all
possible tests. It is quite clear that if we replace the states by $\Phi(\sigma)$ and
$\Phi(\rho)$, the achievable error probabilities cannot be decreased. It is a natural
question when the optimal error probabilities are preserved under $\Phi$, which is
equivalent to  preservation of the $L_1$-norm:
\begin{equation}\label{eq:l1}
\|\rho-s\sigma\|_1=\|\Phi(\rho)-s\Phi(\sigma)\|_1,\qquad \forall s.
\end{equation}
In classical
statistics, the theorem of Pfanzagl \cite{pfanzagl1974acharacterization,
strasser1985mathematical} states that if the achievable  error probabilities
for a pair of probability measures
$\{P_0,P_1\}$ do not increase after transformation by a statistic $T$, then $T$ must be sufficient with respect to
$\{P_0,P_1\}$. The corresponding result for quantum channels was investigated  in
\cite{jencova2010quantum, jencova2012reversibility},  and in \cite{luczak2015onageneral},
where more general risk functions for decision problems were considered. The equivalent
question of preservation of the $L_1$-distance, with applications to error correction,
was studied in \cite{blume2010information, ticozzi2010quantum}. In all these
works, additional conditions were needed, such as the equalities have to be assumed either for
larger sets of states with a special structure, or for any number of copies of $\rho$ and $\sigma$. 
The case when $\rho$ and $\sigma$ commute, or the channel $\Phi$ has commutative range,
was solved in \cite{jencova2012reversibility}.

Many of the results on recoverability of channels rely on an integral representation of
the relative entropies or  other  quantities in question such as $f$-divergences. These
 formulas are based on integral representation of operator convex functions.
Recently, a new integral formula for the relative entropy of positive semidefinite matrices
matrices was proved in
\cite{frenkel2022integral}. This formula can be easily extended to infinite dimensional
Hilbert spaces and rewritten in terms of the optimal
Bayes error probabilities. We use this formula for simple proofs of a
characterization of recoverability of quantum channels by preserving hypothesis testing
error probabilities, or equivalently $L_1$-distances, without any additional assumptions needed in the
previous works.

\section{Preliminaries}

Throughout this paper, $\Ha$ is a 
Hilbert space and we denote by $\Te(\Ha)$ the set of operators with finite trace and by $\Se(\Ha)$ the set of states (density operators) on $\Ha$,
that is, positive operators of trace 1. For a self-adjoint operator $A\in B(\Ha)$, $A_\pm$
denotes the positive/negative part of $A$ and for $A\ge 0$, we denote the projection
onto the support of $A$ by $\supp(A)$. The $L_1$-distance in $\Te(\Ha)$ is defined as
\[
\|S\|_1:=\sup_{\|A\|\le 1} \Tr[AS]=\Tr|S|,\qquad S\in \Te(\Ha).
\]
If $S\in \Te(\Ha)$ is self-adjoint, then we have
\[
\Tr[S_+]=\sup_{0\le M\le I} \Tr[MS],\quad \Tr[S_-]=-\inf_{0\le M\le I} \Tr[MS]
\]
and 
$\Tr[S_\pm]=\tfrac12(\|S\|_1\pm\Tr[S])$.

A quantum channel $\Phi$ is a completely
positive trace preserving map  $\Te(\Ha)\to \Te(\Ka)$.  The adjoint of $\Phi$ is the map $\Phi^*:B(\Ka)\to B(\Ha)$, defined by
\[
\Tr[\Phi^*(A)\rho]=\Tr[A\Phi(\rho)],\qquad A\in B(\Ha),\ \rho\in \Se(\Ha).
\]
It is easily seen that $\Phi^*$ is completely positive and unital. 

For positive operators  $\rho,\sigma \in \Te(\Ha)$, the quantum relative entropy is defined as
\[
D(\rho\|\sigma)=\begin{dcases} \Tr[\rho(\log(\rho)-\log(\sigma))], &  \supp(\rho)\le \supp(\sigma)\\
\infty, & \text{otherwise}.
\end{dcases}
\]
Relative entropy satisfies the data processing inequality \eqref{eq:dpi} which
holds for any pair of states $\rho,\sigma$ and any quantum channel $\Phi: \Te(\Ha)\to
\Te(\Ka)$.

\subsection{Quantum hypothesis testing and $L_1$-distance}

In the problem of hypothesis testing, the task is to test the hypothesis $H_0=\sigma$ against the
alternative $H_1=\rho$. Any test is represented by  an effect on $\Ha$, that is, an
operator $0\le M\le I$, corresponding to rejecting the hypothesis. For a test $M$, the error probabilities
are
\[
\alpha(M)=\Tr[\sigma M],\qquad \beta(M)= \Tr[\rho(I-M)].
\]
For $\lambda\in (0,1)$, we define the Bayes optimal test as the minimizer of 
\[
P_e(\lambda,\sigma,\rho,M):=\lambda\alpha(M)+(1-\lambda)\beta(M)=(1-\lambda)(1-\Tr[(\rho-s\sigma)M]),\qquad
s=\frac{\lambda}{1-\lambda}.
\]
The proof of the following description of the Bayes optimal tests can be found in
\cite{jencova2010quantum}.

\begin{lemma}[Quantum Neyman-Pearson] \label{lemma:qnm} Let $\rho,\sigma$ be states,
$\lambda \in [0,1]$ and put $s=\frac{\lambda}{1-\lambda}$. A test $M$ is a Bayes optimal
test for $\lambda,\sigma,\rho$ if and only if
\[
M=P_{s,+}+X_s,\qquad 0\le X_s\le P_{s,0},
\]
where $P_{s,\pm}=\supp((\rho-s\sigma)_\pm)$ and
$P_{s,0}=I-P_{s,+}-P_{s,-}$. The optimal error probability is then
\begin{align*}
P_e(\lambda,\sigma,\rho):=\max_M
P_e(\lambda,\sigma,\rho,M)&=(1-\lambda)(1-\Tr[(\rho-s\sigma)_+])\\
&=(1-\lambda)(s-\Tr[(\rho-s\sigma)_-])\\
&=\frac12(1-(1-\lambda)\|\rho-s\sigma\|_1).
\end{align*}

\end{lemma}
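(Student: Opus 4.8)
The plan is to rewrite the Bayes risk so that its minimization becomes the elementary problem of maximizing $\Tr[AM]$ over effects $M$, to identify the optimal $M$'s (recovering along the way the value $\Tr[A_+]$ already recorded in the Preliminaries), and finally to read off the optimal error probability by a short computation with $\Tr[A_\pm]$. By the identity displayed just before the lemma, $P_e(\lambda,\sigma,\rho,M)=(1-\lambda)\bigl(1-\Tr[(\rho-s\sigma)M]\bigr)$ (which rests only on $\Tr[\rho]=1$), so minimizing $P_e$ over effects $M$ is the same as maximizing $\Tr[AM]$ over $0\le M\le I$, where $A:=\rho-s\sigma\in\Te(\Ha)$ is self-adjoint. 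I would assume $\lambda\in[0,1)$, the case $\lambda=1$ (where one reads $s=+\infty$) being a trivial direct check.

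Next I would establish the maximum together with the list of maximizers. Decompose $A=A_+-A_-$ with $A_\pm\ge0$, $A_+A_-=0$, support projections $P_{s,\pm}=\supp(A_\pm)$ mutually orthogonal, and $P_{s,0}:=I-P_{s,+}-P_{s,-}$ the kernel projection. For any effect $M$,
\[
\Tr[AM]=\Tr[A_+M]-\Tr[A_-M]\le\Tr[A_+M]=\Tr[A_+^{1/2}MA_+^{1/2}]\le\Tr[A_+],
\]
the first inequality because $\Tr[A_-M]\ge0$, the last because $M\le I$. Equality throughout forces $\Tr[A_-^{1/2}MA_-^{1/2}]=0$ and $\Tr[A_+^{1/2}(I-M)A_+^{1/2}]=0$; as these are traces of positive operators, the operators themselves vanish, which gives $M^{1/2}A_-^{1/2}=0$ and $(I-M)^{1/2}A_+^{1/2}=0$, hence $A_-M=0$ and $(I-M)A_+=0$, i.e. $P_{s,-}M=0$ and $MP_{s,+}=P_{s,+}$. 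A block decomposition with respect to $P_{s,+}$ turns $MP_{s,+}=P_{s,+}$ together with $0\le M\le I$ into $M=P_{s,+}+X_s$ with $0\le X_s\le I-P_{s,+}$, and $P_{s,-}M=0$ further confines $X_s$ to the range of $P_{s,0}$, so $0\le X_s\le P_{s,0}$. Conversely, every $M$ of this form satisfies $\Tr[AM]=\Tr[A_+]$ by orthogonality of $P_{s,+},P_{s,-},P_{s,0}$. This is exactly the asserted description of the Bayes optimal tests.

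Finally I would read off the value: the minimum of $P_e$ equals $(1-\lambda)\bigl(1-\Tr[(\rho-s\sigma)_+]\bigr)$. The identities $\Tr[(\rho-s\sigma)_+]-\Tr[(\rho-s\sigma)_-]=\Tr[\rho-s\sigma]=1-s$ and $\Tr[(\rho-s\sigma)_+]+\Tr[(\rho-s\sigma)_-]=\|\rho-s\sigma\|_1$ then give the second form $(1-\lambda)\bigl(s-\Tr[(\rho-s\sigma)_-]\bigr)$ and, using $(1-\lambda)(1+s)=1$, the third form $\tfrac12\bigl(1-(1-\lambda)\|\rho-s\sigma\|_1\bigr)$.

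I do not expect any genuinely hard step. The only point needing a little care --- and the only place where infinite dimensionality really enters --- is the passage from $\Tr[A_-^{1/2}MA_-^{1/2}]=0$ to $P_{s,-}M=0$ (and likewise for the $A_+$ term): one uses that a positive trace-class operator of zero trace vanishes, that $X^*X=0\Rightarrow X=0$, and that $TB=0$ with $B\ge0$ forces $T\supp(B)=0$. Everything else is routine bookkeeping with orthogonal projections.
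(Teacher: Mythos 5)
Your argument is correct and complete: the reduction to maximizing $\Tr[(\rho-s\sigma)M]$ over effects, the equality analysis via $\Tr[A_\pm^{1/2}\,\cdot\,A_\pm^{1/2}]=0$ forcing $MP_{s,+}=P_{s,+}$ and $P_{s,-}M=0$, and the bookkeeping with $\Tr[A_\pm]$ to get the three expressions all check out, including the trace-class points that matter in infinite dimensions. Note that the paper itself gives no proof of this lemma but simply cites \cite{jencova2010quantum}; your proof is essentially the standard quantum Neyman--Pearson argument found there, so there is nothing to reconcile (beyond observing that the ``$\max_M$'' in the paper's statement is a typo for the minimum, which your reading correctly adopts).
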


It is easily seen that the error probabilities and the related quantities in the above lemma
are monotone under channels, in particular,
\begin{align*}
P_e(\lambda,\Phi(\sigma),\Phi(\rho))&\ge P_e(\lambda,\sigma,\rho),\\
\|\Phi(\rho)-s\Phi(\sigma)\|_1&\le \|\rho-s\sigma\|_1,\\
\Tr[(\Phi(\rho)-s\Phi(\sigma))_-]&\le \Tr[(\rho-s\sigma)_-].
\end{align*}
In fact, monotonicity holds if $\Phi$ is a positive trace preserving map, so complete
positivity is not needed.

\subsection{Integral formula for the relative entropy}

The following new integral representation of the relative entropy was proved by Frenkel  in
\cite{frenkel2022integral}, in the case $\dim(\Ha)<\infty$.

\begin{theorem}\label{thm:frenkel} Let $\rho,\sigma$ be positive operators in $\Te(\Ha)$. Then
\[
D(\rho\|\sigma)=\Tr[\rho-\sigma]+\int_{-\infty}^\infty
\frac{dt}{|t|(1-t)^2}\Tr[((1-t)\rho+t\sigma)_-].
\]
\end{theorem}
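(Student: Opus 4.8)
The plan is to prove Theorem~\ref{thm:frenkel} in the finite-dimensional case first by a spectral/diagonalization reduction to the scalar case, then extend to infinite dimensions by an approximation argument. The key observation is that the integrand $\Tr[((1-t)\rho+t\sigma)_-]$ and the term $\Tr[\rho-\sigma]$ are both well-behaved under the functional calculus once $\rho$ and $\sigma$ are simultaneously handled, so the real content is a one-variable identity. First I would establish the scalar identity: for $a,b>0$,
\[
a\log\frac{a}{b}-a+b=\int_{-\infty}^\infty \frac{dt}{|t|(1-t)^2}\,\big((1-t)a+tb\big)_-,
\]
where $x_-=\max\{-x,0\}$. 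The quantity $(1-t)a+tb$ is negative precisely when $t$ lies between $1$ and $t_0:=a/(a-b)$ (the pole of the line in $t$), so the integral is over a bounded interval with a concrete linear integrand; carrying out this elementary integral (substituting $u=1/t$ or splitting at the relevant endpoints) should reproduce the Bregman divergence of $x\log x$ on the left. I expect this computation to be routine but slightly delicate near $t=0$ and $t=1$, where $1/(|t|(1-t)^2)$ blows up — one must check the numerator $((1-t)a+tb)_-$ vanishes fast enough there (it does, since at $t=1$ the expression equals $b>0$ and near $t=0$ it equals $a>0$, so the negative part is identically zero in neighborhoods of both points).

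Next I would lift this to matrices. The subtlety is that $\rho$ and $\sigma$ need not commute, so one cannot simply diagonalize both. The standard device is to use the fact that the relative entropy and the integral formula are both continuous and to reduce via a limiting argument, or — more elegantly — to invoke a known integral/variational representation. Actually the cleanest route is the one Frenkel uses: interpret $\Tr[((1-t)\rho+t\sigma)_-]$ through Lemma~\ref{lemma:qnm}-type hypothesis-testing quantities, i.e. $\Tr[A_-]=-\inf_{0\le M\le I}\Tr[MA]$, and use the known fact that $D(\rho\|\sigma)$ has a variational formula (e.g. the Petz/Hiai-type or the measured relative entropy bound) — but since the statement is an \emph{exact} identity, the honest approach is a direct matrix computation. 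For that, I would reduce to the commuting case by noting both sides are invariant under $\rho\mapsto U\rho U^*$, $\sigma\mapsto U\sigma U^*$, and then handle the noncommuting case by an explicit argument: differentiate both sides in a parameter, or use the integral representation of the logarithm $\log x = \int_0^\infty(\frac{1}{1+u}-\frac{1}{x+u})\,du$ to write $D(\rho\|\sigma)=\int_0^\infty \Tr[\rho(\sigma+u)^{-1}-\rho(\rho+u)^{-1}]\,du$ plus $\Tr[\rho-\sigma]$ corrections, and then match this against the claimed formula by a change of variables relating $u$ and $t$. The matching of the two integral representations — essentially a Fubini plus a one-dimensional change of variables turning the resolvent integral into the $((1-t)\rho+t\sigma)_-$ integral — is where I expect the main technical work to lie.

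Finally, for $\dim(\Ha)=\infty$, I would pass to the limit over finite-dimensional compressions $P_n\rho P_n$, $P_n\sigma P_n$ with $P_n\uparrow I$. On the right-hand side, monotone/dominated convergence applies since $\Tr[((1-t)\rho+t\sigma)_-]$ is controlled (it is bounded by $\Tr[\sigma]\cdot$ something integrable, using that the negative part only contributes on a bounded $t$-interval determined by the traces), and $\Tr[\rho-\sigma]$ is continuous. On the left-hand side, lower semicontinuity of $D$ gives one inequality immediately, and the matching upper bound follows from the finite-dimensional identity applied to the compressions together with the convergence on the right. The main obstacle overall is the noncommuting matrix identity: verifying that the two integral representations agree term by term, rather than just on the diagonal, requires care with the operator-valued integrands and the order of integration — everything else (the scalar integral, the symmetry reductions, and the infinite-dimensional limit) is comparatively mechanical.
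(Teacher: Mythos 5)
Your outline diverges from the paper at the crucial first step: the paper does not reprove the finite-dimensional identity at all, it simply invokes Frenkel's result (\cite[Theorem 6]{frenkel2022integral}) and devotes the proof to the infinite-dimensional extension. Your attempt to supply the finite-dimensional proof yourself is exactly the part you leave open: unitary invariance does not reduce the problem to the commuting case, and the proposed matching of the resolvent representation $D(\rho\|\sigma)=\int_0^\infty \Tr[\rho(\sigma+u)^{-1}-\rho(\rho+u)^{-1}]\,du+\dots$ against the claimed formula by Fubini and a change of variables is not a viable route, because $\Tr[((1-t)\rho+t\sigma)_-]$ is a spectral quantity of the \emph{combination} $(1-t)\rho+t\sigma$ and does not decompose into resolvents of $\rho$ and $\sigma$ separately; you acknowledge this is ``where the main technical work lies,'' which means the core of the theorem is missing rather than proved. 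There is also a concrete error in your scalar analysis: for $a\ne b$ the set where $(1-t)a+tb<0$ is the \emph{unbounded} half-line beyond $t_0=a/(a-b)$ (lying to the right of $1$ if $a>b$, to the left of $0$ if $a<b$), not a bounded interval between $1$ and $t_0$; convergence at $\pm\infty$ comes from the weight $|t|^{-1}(1-t)^{-2}$ against the linear growth of the negative part.

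That same false ``bounded $t$-interval'' premise undermines your infinite-dimensional step. In general no integrable dominating function exists (the integral must be allowed to diverge, since $D(\rho\|\sigma)$ can be $+\infty$ even when $\supp(\rho)\le\supp(\sigma)$ in infinite dimensions), so dominated convergence is not available; the paper instead uses \emph{monotone} convergence, which requires proving that $f_n(t)=\Tr[((1-t)\rho_n+t\sigma_n)_-]$ is nondecreasing in $n$ (done via $\Tr[A_-]=-\inf_{0\le M\le P_n}\Tr[MA]$) and converges pointwise to $f(t)$ (done via Gr\"umm's theorem on trace-norm convergence of $P_nAP_n$) — neither point appears in your sketch. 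Moreover, your two inequalities do not close the argument as stated: lower semicontinuity indeed gives $D(\rho\|\sigma)\le\Tr[\rho-\sigma]+\int f$, but the reverse inequality does not follow from ``the finite-dimensional identity plus convergence on the right''; you need $D(\rho_n\|\sigma_n)\le D(\rho\|\sigma)$, e.g.\ from the DPI for the pinching $X\mapsto P_nXP_n\oplus(I-P_n)X(I-P_n)$ together with nonnegativity of the residual block term, or you must cite the convergence $D(\rho_n\|\sigma_n)\to D(\rho\|\sigma)$ as the paper does (\cite[Theorem 4.5]{hiai2018quantum}). Finally, the case $\supp(\rho)\not\le\supp(\sigma)$ needs a separate argument (the paper exhibits a projection $Q$ with $\Tr[\sigma Q]=0$, $\Tr[\rho Q]>0$ forcing the integral to diverge); your outline does not address it.
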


\begin{proof} By \cite[Theorem 6]{frenkel2022integral}, the equality holds if
$\dim(\Ha)<\infty$. We will now prove that it can be extended to the case when $\dim(\Ha)=\infty$.
Assume first that $\supp(\rho)\le \supp(\sigma)$, so that we may assume that $\sigma$ is
faithful and $\Ha$ is separable by restriction to the support of $\sigma$. We will use a standard limiting
argument to extend the finite dimensional result to the separable case.

Let $P_n$ be an increasing sequence of finite rank projections such that $\vee_n P_n=I$.
Put $\rho_n=P_n\rho P_n$, $\sigma_n=P_n\sigma P_n$. Then restricted to the finite
dimensional space $P_n\Ha$, $\rho_n$ and $\sigma_n$ are positive semidefinite operators
with $\supp(\rho_n)\le \supp(\sigma_n)$.   Moreover, $\lim_n
\Tr[\rho_n]=\Tr[\rho]$, $\lim_n\Tr[\sigma_n]=\Tr[\sigma]$ and by \cite[Theorem 4.5]{hiai2018quantum} we have
$D(\rho\|\sigma)=\lim_n D(\rho_n\|\sigma_n)$.

For $t\in \mathbb R$ and $n\in \mathbb N$, put
\[
f_n(t):=\Tr[((1-t)\rho_n-t\sigma_n)_-], \qquad f(t):=\Tr[((1-t)\rho-t\sigma)_-].
\]
Then
\begin{align*}
f_n(t)&=\Tr[(t\sigma_n-(1-t)\rho_n)_+]=\sup_{0\le M_n\le P_n}
\Tr[M_n(t\sigma-(1-t)\rho)]\\ &\le \sup_{0\le M_{n+1}\le
P_{n+1}}\Tr[M_{n+1}(t\sigma-(1-t)\rho)]=\Tr[((1-t)\rho_{n+1}-t\sigma_{n+1})_-]=f_{n+1}(t),
\end{align*}
where the inequality follows from $0\le M_n\le P_n\le P_{n+1}$. Furthermore,  since
$P_n\to I$ in the strong operator topology, we have using \cite[Theorem 1]{grumm1973two} that 
 \[
\|P_n((1-t)\rho-t\sigma)P_n\|_1\to \|(1-t)\rho-t\sigma\|_1.
\]
It follows that
\begin{align*}
f_n(t)&=\tfrac12(\|(1-t)\rho_n-t\sigma_n\|_1-\Tr[(1-t)\rho_n-t\sigma_n])\to\tfrac12
(\|(1-t)\rho-t\sigma\|_1-\Tr[(1-t)\rho-t\sigma))\\
&=f(t).
\end{align*}
Hence $f_n$ is an increasing sequence of positive integrable functions converging
pointwise to $f$.  Since the integral formula holds in finite dimensions, we see using the Lebesgue monotone
convergence theorem that
\begin{align*}
D(\rho\|\sigma)&=\lim_n D(\rho_n\|\sigma_n)=\lim_n
\left(\Tr[\rho_n-\sigma_n]+\int_{-\infty}^\infty\frac{dt}{|t|(1-t)^2}f_n(t)\right)\\
&=\Tr[\rho-\sigma]+\int_{-\infty}^\infty
\frac{dt}{|t|(1-t)^2}f(t).
\end{align*}

If $\supp(\rho)\not\le\supp(\sigma)$, then there is some projection $Q$ such that $\Tr[\sigma
Q]=0$ and $c:=\Tr[\rho Q]>0$. Then for any $t>1$ we have 
\[
\Tr[((1-t)\rho-t\sigma)_-]\ge \Tr[Q(t\sigma-(1-t)\rho)]=(t-1)c
\]
and hence
\[
\int_{-\infty}^{\infty}\frac{dt}{|t|(t-1)^2}\Tr[((1-t)\rho-t\sigma)_-]\ge
c\int_1^\infty\frac{dt}{t(t-1)}=\infty.
\]
In this case we also have $D(\rho\|\sigma)=\infty$ by definition.

\end{proof}

The integral  formula leads to an easy proof of the fact that DPI for the relative entropy holds
for all positive trace preserving maps. This fact was first
proved in \cite{mullerhermes2017monotonicity}, using interpolation techniques.

\begin{coro}\label{coro:integral} Let $\rho,\sigma\in \Se(\Ha)$. Then for any $\lambda,\mu\ge 0$ such that 
$\mu\sigma\le \rho\le \lambda \sigma$, we have 
\[
D(\rho\|\sigma)=\int_\mu^\lambda \frac{ds}{s}\Tr[(\rho-s\sigma)_-]+ \log(\lambda)+1-\lambda.
\]

\end{coro}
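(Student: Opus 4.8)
\section*{Proof proposal}

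The plan is to read the claimed identity off Theorem~\ref{thm:frenkel} by a change of variables. Since $\rho,\sigma$ are states we have $\Tr[\rho-\sigma]=0$, and since $(1-t)\rho+t\sigma$ is a convex combination of positive operators for $t\in[0,1]$, the integrand in Theorem~\ref{thm:frenkel} vanishes on $[0,1]$, so we may split the integral over $(-\infty,0)$ and $(1,\infty)$. On each of these intervals I would substitute $s=t/(t-1)$, equivalently $t=s/(s-1)$, which is a decreasing bijection of $(-\infty,0)$ onto $(0,1)$ and of $(1,\infty)$ onto $(1,\infty)$ and which satisfies $(1-t)\rho+t\sigma=(1-t)(\rho-s\sigma)$. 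Tracking the sign of $1-t$ (positive for $t<0$, negative for $t>1$) gives $\Tr[((1-t)\rho+t\sigma)_-]=(1-t)\Tr[(\rho-s\sigma)_-]$ on $(-\infty,0)$ and $\Tr[((1-t)\rho+t\sigma)_-]=(t-1)\Tr[(\rho-s\sigma)_+]$ on $(1,\infty)$. Using $1-t=-(s-1)^{-1}$ and $dt=-(s-1)^{-2}\,ds$, the prefactor $(|t|(1-t)^2)^{-1}$ together with the factor $|1-t|$ coming from the negative part collapses to $|s|^{-1}=s^{-1}$ on both pieces, so Theorem~\ref{thm:frenkel} turns into the global identity
\[
D(\rho\|\sigma)=\int_0^1\frac{ds}{s}\,\Tr[(\rho-s\sigma)_-]+\int_1^\infty\frac{ds}{s}\,\Tr[(\rho-s\sigma)_+],
\]
an equality in $[0,\infty]$ for any pair of states.

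Next I would bring in the operator bounds. Taking traces in $\mu\sigma\le\rho\le\lambda\sigma$ forces $\mu\le 1\le\lambda$; moreover $s\sigma\le\mu\sigma\le\rho$ for $s\le\mu$ gives $(\rho-s\sigma)_-=0$ there, while $\rho\le\lambda\sigma\le s\sigma$ for $s\ge\lambda$ gives $(\rho-s\sigma)_+=0$ there, so the two integrals above reduce to $\int_\mu^1$ and $\int_1^\lambda$. On $[1,\lambda]$ I would write $\Tr[(\rho-s\sigma)_+]=\Tr[(\rho-s\sigma)_-]+\Tr[\rho-s\sigma]=\Tr[(\rho-s\sigma)_-]+1-s$, evaluate the elementary integral $\int_1^\lambda\frac{1-s}{s}\,ds=\log\lambda+1-\lambda$, and merge $\int_\mu^1+\int_1^\lambda=\int_\mu^\lambda$ to obtain the asserted formula.

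The main obstacle is simply the change-of-variables bookkeeping: one must keep straight that the negative part of $(1-t)(\rho-s\sigma)$ equals $(1-t)(\rho-s\sigma)_-$ when $1-t>0$ but equals $|1-t|\,(\rho-s\sigma)_+$ when $1-t<0$, and one should confirm that all integrals written down are finite. The latter is exactly where the hypotheses enter: $\mu\sigma\le\rho$ and $\rho\le\lambda\sigma$ confine the supports of $(\rho-s\sigma)_-$ and $(\rho-s\sigma)_+$ to bounded $s$-intervals, which legitimizes the truncations; near $s=0$ one may note in addition that $\Tr[(\rho-s\sigma)_-]\le\Tr[s\sigma]=s$, so the integrand $s^{-1}\Tr[(\rho-s\sigma)_-]$ stays bounded and the first integral is harmless at its lower endpoint. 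Beyond this, no estimate is needed that is not already contained in Theorem~\ref{thm:frenkel}.
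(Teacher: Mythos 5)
Your proposal is correct and follows essentially the same route as the paper: split the integral of Theorem~\ref{thm:frenkel} at $[0,1]$, substitute $s=t/(t-1)$ on both remaining pieces, truncate using $\mu\sigma\le\rho\le\lambda\sigma$, and convert $\Tr[(\rho-s\sigma)_+]$ to $1-s+\Tr[(\rho-s\sigma)_-]$ to produce the $\log(\lambda)+1-\lambda$ term. The bookkeeping of signs and the change of variables check out, so there is nothing to add.
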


\begin{proof} Since $((1-t)\rho+t\sigma)_-=0$
for $t\in [0,1]$, the integral splits into two parts, integrating over $t\le 0$ and $t\ge
1$. For the first integral, since $1-t>0$, we have
$((1-t)\rho+t\sigma)_-=(1-t)(\rho-\frac{t}{t-1}\sigma)_-$ and 
\begin{align*}
\int_{-\infty}^0\frac{dt}{-t(1-t)^2}\Tr[(1-t)\rho+t\sigma]_-&=\int_{-\infty}^0
\frac{dt}{t(t-1)}\Tr[(\rho-\frac{t}{t-1}\sigma)_-]=\int_0^1\frac{ds}{s}\Tr[(\rho-s\sigma)_-]\\
&=\int_\mu^1\frac{ds}{s}\Tr[(\rho-s\sigma)_-].
\end{align*}

For $t\ge 1$, we use
$((1-t)\rho+t\sigma)_-=((t-1)\rho-t\sigma)_+=(t-1)(\rho-\frac{t}{t-1}\sigma)_+$ and
inserting into the integral, we obtain
\[
\int_1^\infty\frac{dt}{t(t-1)^2}\Tr[((1-t)\rho+t\sigma)_-]=\int_1^\infty
\frac{ds}{s}\Tr[(\rho-s\sigma)_+]=\int_1^\lambda \frac{ds}{s}\Tr[(\rho-s\sigma)_+].
\]
The proof is finished by using  the equality $\Tr[(\rho-s\sigma)_+]=1-s +\Tr[(\rho-s\sigma)_-]$.
\end{proof}

\begin{remark}\label{rem:hilbert}
The smallest value of $\lambda$ in the above expression is related to the quantum max-relative entropy
defined as\[
D_{\max}(\rho\|\sigma):=\log\min\{\lambda,\ \rho\le \lambda\sigma\}.
\]
Similarly, the largest value of $\mu$ is $e^{-D_{\max}(\sigma\|\rho)}$. An important
related quantity is the Hilbert projective metric \cite{bushell1973hilbert}
\[
D_{\Omega}(\rho\|\sigma):=D_{\max}(\rho\|\sigma)+D_{\max}(\sigma\|\rho).
\]
See \cite{reeb2011hilbert,regula2022probabilistic, regula2022postselected} for more details and interpretations in the context of
quantum information theory. Note also that we may always put $\mu=0$ and if $\dim(\Ha)<\infty$, then the conditions that
$\rho\le \lambda\sigma$ for some $\lambda>0$ is equivalent to $\supp(\rho)\le
\supp(\sigma)$, so it holds whenever $D(\rho\|\sigma)$ is finite.
In infinite dimensions, this condition is much more restrictive. 

\end{remark}

\subsection{Sufficiency and recoverability for quantum channels}

The following definition first appeared in \cite{petz1988sufficiency} and can be seen as a
quantum generalization of the classical notion of a sufficient statistic.

\begin{defi}\label{def:sufficiency}
We say that a channel $\Phi:B(\Ha)\to B(\Ka)$ is sufficient with respect to a set of
states $\Se\subseteq \Se(\Ha)$ if there exists a channel $\Psi:B(\Ka)\to B(\Ha)$ such that
\[
\Psi\circ \Phi(\rho)=\rho,\qquad  \forall \rho\in \Se.
\]
\end{defi}

For a state $\sigma\in \Se(\Ha)$, we define an inner product $\<\cdot,\cdot\>_\sigma$ in
$B(\supp(\sigma))$ by
\[
\<A,B\>_\sigma:=\Tr[A^*\sigma^{1/2}B\sigma^{1/2}],\qquad A,B\in B(\supp(\sigma)).
\]
It was proved in \cite{petz1988sufficiency} that the (unique) linear map  $\Phi_\sigma: \Te(\supp(\Phi(\sigma)))\to
\Te(\supp(\sigma))$ determined by
\[
\<\Phi^*(B),A\>_\sigma=\<B,\Phi_\sigma^*(A)\>_{\Phi(\sigma)},\qquad A\in B(\supp(\sigma)),
B\in B(\supp(\Phi(\sigma)))
\]
is a channel, called the Petz dual of $\Phi$ with respect to $\sigma$ (or the Petz
recovery map). Note that we always have $\Phi_\sigma\circ
\Phi(\sigma)=\sigma$ and as it was further proved in \cite{petz1988sufficiency}, if both
$\sigma$ and $\Phi(\sigma)$ are faithful, then $\Phi$ is sufficient with respect to $\Se$
if and only if $\Phi_\sigma\circ \Phi(\rho)=\rho$ for all $\rho\in \Se$, so that
$\Phi_\sigma$ is a universal recovery channel.
\begin{remark} If $\dim(\Ha)<\infty$, we obtain the familiar form of the Petz recovery
channel:
\[
\Phi_\sigma(\cdot)=
\sigma^{1/2}\Phi^*(\Phi(\sigma)^{-1/2}\cdot \Phi(\sigma)^{-1/2})\sigma^{1/2}.
\]
\end{remark}

We also define 
\[
\Phi_{\sigma,t}(\cdot)= \sigma^{-it}\Phi_\sigma(\Phi(\sigma)^{it}\cdot
\Phi(\sigma)^{-it})\sigma^{it},\qquad t\in \mathbb R
\]
and
\[
\Phi_{\sigma,\mu}(\cdot)=\int_{-\infty}^\infty\Phi_{\sigma,t}(\cdot)d\mu(t),
\]
for a probability measure $\mu$ on $\mathbb R$. Clearly, all these maps are channels $\Te(\supp(\Phi(\sigma)))\to
\Te(\supp(\sigma))$ that recover the state $\sigma$.

%

\begin{theorem}\label{thm:petz}  Assume that $\rho,\sigma \in \Se(\Ha)$ are such that 
$D(\rho\|\sigma)<\infty$. Then the following are equivalent.
\begin{enumerate}
\item[(i)] $\Phi$ is sufficient with respect to $\{\rho,\sigma\}$;
\item[(ii)] $D(\Phi(\rho)\|\Phi(\sigma))=D(\rho\|\sigma)$;
\item[(iii)] $\Phi_{\sigma,t}\circ \Phi(\rho)=\rho$, for some $t\in
\mathbb R$;
\item[(iv)] $\Phi_{\sigma,t}\circ \Phi(\rho)=\rho$, for  all $t\in
\mathbb R$;
\item[(v)] $\Phi_{\sigma,\mu}\circ \Phi(\rho)=\rho$ for some probability measure $\mu$.

\end{enumerate}

\end{theorem}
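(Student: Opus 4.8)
The plan is to establish the cycle (i) $\Rightarrow$ (ii) $\Rightarrow$ (iv) $\Rightarrow$ (iii) $\Rightarrow$ (v) $\Rightarrow$ (i), in which only (ii) $\Rightarrow$ (iv) requires real work. For (i) $\Rightarrow$ (ii): if a channel $\Psi$ recovers both states, then applying the DPI \eqref{eq:dpi} to $\Phi$ and then to $\Psi$ gives $D(\rho\|\sigma)=D(\Psi\circ\Phi(\rho)\|\Psi\circ\Phi(\sigma))\le D(\Phi(\rho)\|\Phi(\sigma))\le D(\rho\|\sigma)$, so all inequalities are equalities. The implication (iv) $\Rightarrow$ (iii) is immediate, and (iii) $\Rightarrow$ (v) follows by taking $\mu=\delta_t$. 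For (v) $\Rightarrow$ (i): the map $\Phi_{\sigma,\mu}$ is a channel $\Te(\supp(\Phi(\sigma)))\to\Te(\supp(\sigma))$ with $\Phi_{\sigma,\mu}\circ\Phi(\sigma)=\sigma$ and, by (v), $\Phi_{\sigma,\mu}\circ\Phi(\rho)=\rho$; since $D(\rho\|\sigma)<\infty$ implies $\supp(\rho)\le\supp(\sigma)$, hence $\supp(\Phi(\rho)),\supp(\Phi(\sigma))\le P:=\supp(\Phi(\sigma))$, we may extend $\Phi_{\sigma,\mu}$ to a channel $\Psi:B(\Ka)\to B(\Ha)$, for instance $\Psi(X)=\Phi_{\sigma,\mu}(PXP)+\Tr[(I-P)X]\,\tau$ with $\tau$ a fixed state on $\supp(\sigma)$, and this $\Psi$ recovers both $\rho$ and $\sigma$.

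It remains to prove (ii) $\Rightarrow$ (iv), which I would split into two halves; in the first the integral formula does all the work. Applying Theorem \ref{thm:frenkel} to $\rho,\sigma$ and carrying out the change of variables from the proof of Corollary \ref{coro:integral} with $\mu=0$, $\lambda\to\infty$ gives
\[
D(\rho\|\sigma)=\int_0^1\frac{ds}{s}\Tr[(\rho-s\sigma)_-]+\int_1^\infty\frac{ds}{s}\Tr[(\rho-s\sigma)_+],
\]
and the analogous identity for $\Phi(\rho),\Phi(\sigma)$. By the monotonicity relations recalled in Section 2.1, and since $\Tr[S_+]=\Tr[S_-]+\Tr[S]$ with $\Tr[\rho-s\sigma]=\Tr[\Phi(\rho)-s\Phi(\sigma)]$, both integrands are pointwise non-increasing under $\Phi$; since by (ii) the two (necessarily finite, by \eqref{eq:dpi}) sums coincide, each integrand is unchanged for almost every $s$, and therefore $\|\rho-s\sigma\|_1=\|\Phi(\rho)-s\Phi(\sigma)\|_1$ for almost every $s>0$ and hence, by continuity of $s\mapsto\|\rho-s\sigma\|_1$, for every $s\in\mathbb R$ (the case $s\le 0$ being trivial). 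Thus all optimal Bayes error probabilities are preserved.

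The second half — passing from preservation of all $L_1$-distances to the recovery identity $\Phi_{\sigma,t}\circ\Phi(\rho)=\rho$ for every $t$ — is the main obstacle. Writing $\rho_t:=\Phi_{\sigma,t}\circ\Phi(\rho)$ and using that $\Phi_{\sigma,t}\circ\Phi$ fixes $\sigma$, one has $D(\rho_t\|\sigma)\le D(\rho\|\sigma)$ and $\|\rho_t-s\sigma\|_1\le\|\rho-s\sigma\|_1$ for all $s$, and the goal is to promote these to equalities, which forces $\rho_t=\rho$. For this I would fall back on the structure theory of sufficient channels: reading the first half backwards, preservation of all $L_1$-distances is equivalent to the equality $D(\Phi(\rho)\|\Phi(\sigma))=D(\rho\|\sigma)$, whence $\Phi$ is sufficient with respect to $\{\rho,\sigma\}$ by Petz's theorem \cite{petz1988sufficiency}, and under sufficiency the rotated Petz maps are all recovery channels — equivalently, $\Phi^*$ intertwines the modular structures of $(\rho,\sigma)$ and $(\Phi(\rho),\Phi(\sigma))$, i.e.\ preserves the Connes cocycle $[\rho:\sigma]_t$; see \cite{junge2018universal} and the references therein. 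The delicate, by now classical, point is precisely this last step: the plain Petz map $\Phi_\sigma$ alone need not recover $\rho$, so one must use the whole modular family $\{\Phi_{\sigma,t}\}_{t\in\mathbb R}$; the contribution of the new integral representation is the clean first half, the equivalence of the entropy equality (ii) with preservation of the hypothesis testing error probabilities.
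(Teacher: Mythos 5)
Your reduction of the theorem to a single hard implication is sound, and the easy parts of your cycle are fine: (i) $\Rightarrow$ (ii) by the DPI applied twice, (iv) $\Rightarrow$ (iii) $\Rightarrow$ (v) trivially, and your extension of $\Phi_{\sigma,\mu}$ to a channel on all of $\Te(\Ka)$ for (v) $\Rightarrow$ (i) works (the paper treats these steps as "easy" as well). The first half of your (ii) $\Rightarrow$ (iv) --- that the entropy equality forces $\|\rho-s\sigma\|_1=\|\Phi(\rho)-s\Phi(\sigma)\|_1$ for all $s$, via the integral formula of Theorem \ref{thm:frenkel} and pointwise monotonicity of the integrands --- is correct, but it is also superfluous in your own argument: in the second half you immediately read it backwards and invoke Petz's theorem to pass from the entropy equality to sufficiency, which you could have done directly from (ii) by citing \cite{petz1988sufficiency,jencova2006sufficiency}, exactly as the paper does for (ii) $\Rightarrow$ (iii).

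The genuine gap is the step you yourself call "the main obstacle": that sufficiency implies $\Phi_{\sigma,t}\circ\Phi(\rho)=\rho$ for \emph{all} $t$, i.e. (i) $\Rightarrow$ (iv). You dispose of it with "under sufficiency the rotated Petz maps are all recovery channels \dots see \cite{junge2018universal} and the references therein", but this is precisely the statement that needs a proof here, and it is not available in that form in the required generality: \cite{junge2018universal} gives the approximate universal-recovery bound, and the exact-equality statement is proved in finite dimensions in \cite[Theorem 3.3]{wilde2015recoverability}, whereas Theorem \ref{thm:petz} is stated for arbitrary Hilbert spaces --- the paper's appendix exists exactly to supply this general-case argument. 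There the proof runs: sufficiency implies that the Connes cocycles $u_t=\rho^{it}\sigma^{-it}$ and $v_t=\Phi(\rho)^{it}\Phi(\sigma)^{-it}$ lie in the multiplicative domains of $\Phi_\sigma^*$ and $\Phi^*$, with $\Phi^*(v_t)=u_t$ and $\Phi_\sigma^*(u_t)=v_t$; the cocycle identities then give $\Phi^*(\Phi(\sigma)^{it}v_s\Phi(\sigma)^{-it})=\sigma^{it}u_s\sigma^{-it}$, and a computation with the inner product $\<\cdot,\cdot\>_\sigma$ defining the Petz dual, followed by analytic continuation to $s=-\tfrac{i}{2}$, yields $\Tr[A\rho]=\Tr[A\,\Phi_{\sigma,t}(\Phi(\rho))]$ for every $A$ and every $t$. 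Citing the preservation of the cocycles themselves is legitimate (it is in Petz and Jen\v{c}ov\'a--Petz), but the passage from cocycle preservation to the recovery identity for every $t$ is the substance of the appendix, and your proposal replaces it by an appeal to the literature for a claim that, in this generality, is what the theorem is actually establishing.
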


\begin{proof} In finite dimensions the proof follows from  \cite[Theorem 3.3]{wilde2015recoverability}. 
The proof in the general case will be given in the Appendix.

\end{proof}

The following is an approximate version of sufficiency of channels, called recoverability
of $\Phi$.

\begin{theorem}\label{thm:recoverability} \cite{junge2018universal} Let $\sigma\in
\Se(\Ha)$. Then for any channel $\Phi: \Te(\Ha)\to \Te(\Ka)$ there exists a channel
$\Phi^u_\sigma: \Te(\Ka)\to \Te(\Ha)$ such that 
$\Phi^u_\sigma\circ\Phi(\sigma)=\sigma$ and for any $\rho\in \Se(\Ha)$ we have
\[
D(\rho\|\sigma)\ge D(\Phi(\rho)\|\Phi(\sigma))-2\log F(\rho,\Phi^u_\sigma\circ
\Phi(\rho))\ge D(\Phi(\rho)\|\Phi(\sigma))+ \frac14
\|\rho-\Phi^u_\sigma\circ\Phi(\rho)\|_1^2.
\]
\end{theorem}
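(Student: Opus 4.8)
Since the statement is quoted from \cite{junge2018universal}, my plan is to reconstruct a proof, using the material of Section~2 to streamline the reductions. I would first dispose of the easy second inequality: the Fuchs--van~de~Graaf bound $\tfrac12\|\rho-\omega\|_1\le\sqrt{1-F(\rho,\omega)^2}$ gives $F(\rho,\omega)^2\le1-\tfrac14\|\rho-\omega\|_1^2$, and then $-2\log F(\rho,\omega)=-\log F(\rho,\omega)^2\ge-\log\!\big(1-\tfrac14\|\rho-\omega\|_1^2\big)\ge\tfrac14\|\rho-\omega\|_1^2$, using $-\log(1-x)\ge x$ on $[0,1)$; taking $\omega=\Phi^u_\sigma\circ\Phi(\rho)$ finishes this part once $\Phi^u_\sigma$ is constructed.

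For the construction I would take $\Phi^u_\sigma:=\Phi_{\sigma,\mu_0}$, the rotated Petz average of Section~2.3 for a universal probability measure $\mu_0$ on $\mathbb R$, concretely the Junge et al.\ weight proportional to $(\cosh(\pi t)+1)^{-1}\,dt$ (up to the normalisation convention for $t$ in the definition of $\Phi_{\sigma,t}$). By the discussion preceding Theorem~\ref{thm:petz} this is a channel $\Te(\supp(\Phi(\sigma)))\to\Te(\supp(\sigma))$ with $\Phi^u_\sigma\circ\Phi(\sigma)=\sigma$, as required; extend it to a channel on all of $\Te(\Ka)$ in the usual way.

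The heart of the matter is the first inequality $D(\rho\|\sigma)-D(\Phi(\rho)\|\Phi(\sigma))\ge -2\log F(\rho,\Phi^u_\sigma\circ\Phi(\rho))$, which we may assume to have finite left-hand side (the other case being trivial). I would reduce to $\dim(\Ha),\dim(\Ka)<\infty$ as in the proof of Theorem~\ref{thm:frenkel}: restrict to $\supp(\sigma)$ and $\supp(\Phi(\sigma))$ so both become faithful and the spaces separable, compress to finite-rank $\rho_n=P_n\rho P_n$, $\sigma_n=P_n\sigma P_n$ and the induced channels $\Phi_n$, use lower semicontinuity of $D$ together with the continuity of $\|\cdot\|_1$ and of $F$ under these compressions, and pass to the limit along a subsequence for which the finite-dimensional recovery channels converge in the BW (weak-$*$) topology---a standard but non-trivial compactness step. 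In finite dimensions there are two natural routes. The first follows \cite{junge2018universal,wilde2015recoverability}: rewrite $-2\log F(\rho,\omega)$ as the sandwiched R\'enyi divergence $\widetilde D_{1/2}(\rho\|\omega)=-2\log\Tr[(\omega^{1/2}\rho\,\omega^{1/2})^{1/2}]$, introduce an analytic family of operators on the strip $0\le\mathrm{Re}\,z\le1$ whose value at $z=0$ reproduces the relative-entropy difference and whose boundary values on $\mathrm{Re}\,z=1$ are exactly the rotated Petz maps $\Phi_{\sigma,t}$ applied to $\Phi(\rho)$, and combine a Hadamard three-lines estimate with the joint concavity of fidelity, which lets one pull the $\mu_0$-average inside $F$. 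The second route is native to this paper: by Theorem~\ref{thm:frenkel}, since $\Tr[\rho-\sigma]=\Tr[\Phi(\rho)-\Phi(\sigma)]=0$,
\[
D(\rho\|\sigma)-D(\Phi(\rho)\|\Phi(\sigma))=\int_{-\infty}^\infty\frac{dt}{|t|(1-t)^2}\Big(\Tr[((1-t)\rho+t\sigma)_-]-\Tr[((1-t)\Phi(\rho)+t\Phi(\sigma))_-]\Big),
\]
and every term of the integrand is nonnegative by the monotonicity recorded in Section~2.1; one would then lower bound each such term by a fidelity-type quantity for the rotated Petz recovery at the corresponding parameter and reassemble, once more invoking concavity of $F$.

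The main obstacle is precisely this first inequality. On the interpolation route the delicate points are the correct choice of analytic family, proving its boundedness and analyticity on the strip (which is why the finite-dimensional reduction is carried out first---in infinite dimensions $\sigma^{-z/2}$ is unbounded), and checking that the boundary behaviour is governed by $\mu_0$ and by no other probability measure. On the integral-formula route the difficulty is to obtain a \emph{uniform} pointwise lower bound on $\Tr[((1-t)\rho+t\sigma)_-]-\Tr[((1-t)\Phi(\rho)+t\Phi(\sigma))_-]$ in terms of one and the same recovery channel, independent of $t$. Everything else---the two reductions, the second inequality, and the fact that $\Phi^u_\sigma$ recovers $\sigma$---is routine given the tools already in place.
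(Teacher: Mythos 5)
Your handling of everything the paper actually argues is on target: the recovery channel is the same rotated Petz average (the paper takes $\Phi^u_\sigma(\cdot)=\Phi_{\sigma,\beta_0}(P\cdot P)+\Tr[(I-P)\,\cdot\,]$ as in \eqref{eq:universal}, with $\beta_0(t)=\pi/(\cosh(2\pi t)+1)$ — your ``extend in the usual way'' is made explicit there), the case $\supp(\rho)\not\le\supp(\sigma)$ is dismissed exactly as you do since then $D(\rho\|\sigma)=\infty$, and your Fuchs--van de Graaf plus $-\log(1-x)\ge x$ derivation of the second inequality is precisely the paper's remark. The difference is that the paper does not reprove the core inequality $D(\rho\|\sigma)-D(\Phi(\rho)\|\Phi(\sigma))\ge-2\log F(\rho,\Phi^u_\sigma\circ\Phi(\rho))$ at all: after the support reduction it is imported verbatim from \cite[Theorem 2.1]{junge2018universal}, which is the content the theorem is attributed to.

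Measured against the goal you set yourself — reconstructing that core inequality rather than citing it — your proposal has a genuine gap exactly where you say it does. The interpolation route is only named: no analytic family is written down, no three-lines (Hirschman/Stein) estimate is carried out, and the preliminary reduction you propose (finite-rank compressions plus a BW-compactness limit of recovery maps) is both delicate and unnecessary, since Junge et al.\ work directly on separable Hilbert spaces via a Stinespring dilation, so the unboundedness of $\sigma^{-z/2}$ is not an obstruction one must remove by truncation. The alternative ``integral-formula'' route does not work as stated: Theorem \ref{thm:frenkel} gives a nonnegative integrand, but there is no known $t$-wise lower bound of $\Tr[((1-t)\rho+t\sigma)_-]-\Tr[((1-t)\Phi(\rho)+t\Phi(\sigma))_-]$ by a fidelity-type quantity for one fixed recovery channel that reassembles into $-2\log F$; indeed the paper uses the integral formula only in the opposite direction (Theorem \ref{thm:recoverability_e}), converting an $\epsilon$-closeness of the testing quantities into a bound on the entropy difference and then invoking Theorem \ref{thm:recoverability} as a black box. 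So either cite \cite{junge2018universal} for the first inequality, as the paper does, or commit to the interpolation argument in full; as it stands the heart of the statement is not established.
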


In the above theorem, $F(\rho_0,\rho_1)$ is the fidelity
\[
F(\rho_0,\rho_1)=\|\rho_0^{1/2}\rho_1^{1/2}\|_1.
\]
The second inequality in Theorem \ref{thm:recoverability} is
obtained using the inequality $-\log(x)\ge 1-x$ for $x\in
(0,1)$ and the Fuchs-van de Graaf inequality, \cite{fuchs1999cryptographic}.

The universal recovery channel $\Phi^u_\sigma$ can be chosen as
\begin{equation}\label{eq:universal}
\Phi^u_\sigma(\cdot)=\Phi_{\sigma,\beta_0}(P\cdot P)+\Tr[(I-P)\cdot ],
\end{equation}
here $P=\supp(\Phi(\sigma))$ and $\beta_0$ is the probability density
function
\[
\beta_0(t)=\frac{\pi}{\cosh(2\pi t)+1}.
\]
Note that if $\supp(\rho)\le \supp(\sigma)$, then $\supp(\Phi(\rho))\le
\supp(\Phi(\sigma))$, so that
$\Phi^u_\sigma(\Phi(\rho))=\Phi_{\sigma,\beta_0}(\Phi(\rho))$ and the statement in this
case follows by \cite[Theorem 2.1]{junge2018universal}. If
$\supp(\rho)\not\le\supp(\sigma)$, then $D(\rho\|\sigma)=\infty$ and the inequality holds
trivially.

\section{Sufficiency and recoverability by hypothesis testing}

The characterization in Theorem \ref{thm:petz} and the integral formula in Corollary
\ref{coro:integral} now give an easy proof of characterization of sufficiency and
recoverability by quantities related to hypothesis testing. Note that here we do not have
to make any further assumptions about the states.

\begin{theorem}\label{thm:sufficiency} Let $\Phi:\Te(\Ha)\to \Te(\Ka)$ be a channel and let
$\rho,\sigma\in \Se(\Ha)$.
Then the following are
equivalent.
\begin{enumerate}
\item[(i)] $P_e(\lambda, \Phi(\rho),\Phi(\sigma))= P_e(\lambda,\rho,\sigma)$, for all  $\lambda\in
[0,1]$;
\item[(ii)]  $\|\Phi(\rho)-s\Phi(\sigma)\|_1= \|\rho-s\sigma\|_1$, for all
$s\ge 0$;
\item[(iii)]  $\Tr[(\Phi(\rho)-s\Phi(\sigma))_{+}]= \Tr[(\rho-s\sigma)_{+}]$, for all $s\ge 0$;

\item[(iv)]  $\Tr[(\Phi(\rho)-s\Phi(\sigma))_{-}]= \Tr[(\rho-s\sigma)_{-}]$, for all
$s\ge 0$;
\item[(v)] $\Phi$ is sufficient with respect to $\{\rho,\sigma\}$.

\end{enumerate}

\end{theorem}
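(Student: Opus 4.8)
The plan is to establish the block (i)--(iv) by elementary manipulations, the implication (v)$\Rightarrow$(ii) by monotonicity, and then the substantial implication (ii)$\Rightarrow$(v) by reducing to a finite-entropy situation and applying the integral formula of Corollary~\ref{coro:integral}.

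Conditions (i)--(iv) are the same statement written four ways. By Lemma~\ref{lemma:qnm}, $P_e(\lambda,\sigma,\rho)=\tfrac12(1-(1-\lambda)\|\rho-s\sigma\|_1)$ with $s=\lambda/(1-\lambda)$, and $\lambda\mapsto s$ is a bijection of $[0,1)$ onto $[0,\infty)$ (the value $\lambda=1$ being trivial since then $P_e=0$), so (i)$\Leftrightarrow$(ii). Since $\rho,\sigma$ are states and $\Phi$ is trace preserving, $\Tr[\rho-s\sigma]=1-s=\Tr[\Phi(\rho)-s\Phi(\sigma)]$, and $\Tr[A_\pm]=\tfrac12(\|A\|_1\pm\Tr[A])$ shows that for each fixed $s$ any one of $\|\rho-s\sigma\|_1$, $\Tr[(\rho-s\sigma)_+]$, $\Tr[(\rho-s\sigma)_-]$ determines the other two (and likewise for the images), so (ii)$\Leftrightarrow$(iii)$\Leftrightarrow$(iv) term by term in $s$. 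For (v)$\Rightarrow$(ii): if $\Psi$ is a channel with $\Psi\circ\Phi(\rho)=\rho$ and $\Psi\circ\Phi(\sigma)=\sigma$, then the monotonicity of $\|\cdot\|_1$ recorded after Lemma~\ref{lemma:qnm}, applied once under $\Psi$ and once under $\Phi$, gives $\|\rho-s\sigma\|_1=\|\Psi\Phi(\rho)-s\Psi\Phi(\sigma)\|_1\le\|\Phi(\rho)-s\Phi(\sigma)\|_1\le\|\rho-s\sigma\|_1$ for all $s\ge0$.

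For (ii)$\Rightarrow$(v) I would replace $\sigma$ by $\sigma':=\tfrac12(\rho+\sigma)$, which does two things at once. First, $\rho\le2\sigma'$ and $\Phi(\rho)\le2\Phi(\sigma')$, so Corollary~\ref{coro:integral} applies to both $(\rho,\sigma')$ and $(\Phi(\rho),\Phi(\sigma'))$ with $\mu=0$, $\lambda=2$, and in particular both relative entropies are finite --- recall from Remark~\ref{rem:hilbert} that $\rho\le\lambda\sigma$ may fail for the original pair in infinite dimensions. Second, a channel fixing $\rho$ fixes $\sigma$ if and only if it fixes $\sigma'$, so sufficiency with respect to $\{\rho,\sigma'\}$ coincides with sufficiency with respect to $\{\rho,\sigma\}$. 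Moreover (ii) for $(\rho,\sigma)$ carries over to $(\rho,\sigma')$: by homogeneity (ii) is equivalent to $\|a\rho-b\sigma\|_1=\|a\Phi(\rho)-b\Phi(\sigma)\|_1$ for all $a,b\ge0$, and $\rho-s\sigma'=(1-\tfrac s2)\rho-\tfrac s2\sigma$ together with $\Phi(\sigma')=\tfrac12(\Phi(\rho)+\Phi(\sigma))$ reduces the primed claim to this when $s<2$, while for $s\ge2$ both sides equal $s-1$ since $\rho-s\sigma'\le0$ and trace is preserved.

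Now Corollary~\ref{coro:integral} writes each of $D(\rho\|\sigma')$ and $D(\Phi(\rho)\|\Phi(\sigma'))$ as $\log2-1$ plus $\int_0^2\tfrac{ds}{s}$ of the respective $\Tr[(\,\cdot\,-s\,\cdot\,)_-]$; by (ii), hence (iv), for $(\rho,\sigma')$ the two integrands agree for $s\in(0,2)$, so $D(\Phi(\rho)\|\Phi(\sigma'))=D(\rho\|\sigma')<\infty$, and Theorem~\ref{thm:petz} gives that $\Phi$ is sufficient with respect to $\{\rho,\sigma'\}$, hence with respect to $\{\rho,\sigma\}$. I expect the only real obstacle to be this reduction: one has to arrange the Petz-type equivalence of Theorem~\ref{thm:petz} and the integral representation of Corollary~\ref{coro:integral} to be simultaneously applicable --- both carry finiteness/boundedness hypotheses that $\{\rho,\sigma\}$ need not satisfy when $\dim\Ha=\infty$ --- and to verify stability of the hypothesis-testing condition under the passage from $\sigma$ to $\sigma'$; everything else is the change of variables already carried out in the proof of Corollary~\ref{coro:integral}. (Alternatively one could apply Theorem~\ref{thm:frenkel} directly to obtain $D(\Phi(\rho)\|\Phi(\sigma))=D(\rho\|\sigma)$, finite or not, and then handle $D=\infty$ separately, but the detour through $\sigma'$ is cleaner.)
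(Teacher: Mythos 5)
Your proposal is correct and follows essentially the same route as the paper: the (i)--(iv) block via Lemma~\ref{lemma:qnm}, (v)$\Rightarrow$(ii) by monotonicity of the $L_1$-norm, and (ii)$\Rightarrow$(v) via Corollary~\ref{coro:integral} and Theorem~\ref{thm:petz} after replacing $\sigma$ by $\tfrac12(\rho+\sigma)$, which is exactly the paper's reduction (the paper merely treats the case $\rho\le\lambda\sigma$ first and calls the transfer of (ii) to the mixed state ``easily seen''). Your write-up usefully spells out those glossed-over verifications (homogeneity, the $s\ge 2$ case, finiteness of $D(\rho\|\tfrac12(\rho+\sigma))$, and the transfer of sufficiency back to $\{\rho,\sigma\}$), but the underlying argument is the same.
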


\begin{proof} The equivalences between (i)-(iv) are clear from Lemma \ref{lemma:qnm}. 
 Assume that (iv) holds. Suppose first that $\rho\le \lambda\sigma$ for some
$\lambda>0$. Then also $\Phi(\rho)\le
\lambda\Phi(\sigma)$ and we have by Corollary \ref{coro:integral}
\begin{align*}
D(\Phi(\rho)\|\Phi(\sigma))&=\int_0^\lambda
\frac{ds}{s}\Tr[(\Phi(\rho)-s\Phi(\sigma))_-]+\log(\lambda)+1-\lambda\\
&=
\int_0^\lambda
\frac{ds}{s}\Tr[(\rho-s\sigma)_-]+\log(\lambda)+1-\lambda=D(\rho\|\sigma).
\end{align*}
 By Theorem
\ref{thm:petz}, this implies (v). In the general case, let
$\sigma_0=\frac12(\rho+\sigma)$, then $\rho\le 2\sigma_0$ and it is easily
seen that the equality (ii) implies a similar equality with $\sigma$ replaced by
$\sigma_0$. It follows that $\Phi$ is sufficient with respect to $\{\rho,\sigma_0\}$,
which implies (v).  The implication (v) $\implies$ (ii) follows from
monotonicity of the $L_1$-distance.

\end{proof}

We are now interested in a  similar result for recoverability. Assume first that there is a channel
$\Lambda: \Te(\Ka)\to \Te(\Ha)$ such that $\Lambda\circ \Phi(\sigma)=\sigma$ and $\|\Lambda\circ
\Phi(\rho)-\rho\|_1\le \epsilon$. We then have
\begin{equation}\label{eq:epsilon_norm}
\|\rho-s\sigma\|_1=\|\rho-\Lambda\circ\Phi(\rho)+\Lambda\circ\Phi(\rho)-s\Lambda\circ
\Phi(\sigma)\|_1\le \|\Phi(\rho)-s\Phi(\sigma)\|_1+\epsilon.
\end{equation}
Using Lemma \ref{lemma:qnm}, we see that the resulting inequality in \eqref{eq:epsilon_norm} is equivalent to any of
the following inequalities
\begin{align}
\Tr[(\Phi(\rho)-s\Phi(\sigma))_+]&\ge \Tr[(\rho-s\sigma)_+]-\frac{\epsilon}2,\qquad s\ge 0\\
\Tr[(\Phi(\rho)-s\Phi(\sigma))_-]&\ge
\Tr[(\rho-s\sigma)_-]-\frac{\epsilon}2\label{eq:epsilon_minus},\qquad s\ge 0\\
P_e(\lambda,\Phi(\sigma),\Phi(\rho))&\le
P_e(\lambda,\sigma,\rho)+\frac{1-\lambda}2\epsilon,\qquad \lambda\in [0,1].
\end{align}

The following result gives the converse statement. Note that here we will need the
assumption that the Hilbert projective metric $D_\Omega(\rho\|\sigma)$ is finite,
equivalently, that $\mu\sigma\le \rho\le
\lambda\sigma$ for some $\mu,\lambda>0$ (see Remark \ref{rem:hilbert}), to get a nontrivial
result.

\begin{theorem}\label{thm:recoverability_e}  Let $\rho,\sigma\in \Se(\Ha)$ and let
$\Phi:\Te(\Ha)\to \Te(\Ka)$ be a quantum  channel. If 
\[
\|\Phi(\rho)-s\Phi(\sigma)\|_1\ge \|\rho-s\sigma\|_1-\epsilon,\qquad \forall s\ge 0
\]
holds for some $\epsilon\ge 0$, then there exists a channel $\Lambda:\Te(\Ka)\to \Te(\Ha)$ such
that $\Lambda\circ \Phi(\sigma)=\sigma$ and 
\[
\|\Lambda\circ \Phi(\rho)-\rho\|_1\le
\sqrt{2\epsilon}D_\Omega(\rho\|\sigma)^{1/2}.
\]
Moreover, we may take $\Lambda =\Phi^u_\sigma$ as in \eqref{eq:universal}.

\end{theorem}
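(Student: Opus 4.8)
The plan is to bound the entropy gap $D(\rho\|\sigma)-D(\Phi(\rho)\|\Phi(\sigma))$ by $\tfrac{\epsilon}{2}D_\Omega(\rho\|\sigma)$ using the integral formula of Corollary~\ref{coro:integral}, and then feed this into the recoverability inequality of Theorem~\ref{thm:recoverability} to obtain the claimed $L_1$ bound for $\Lambda=\Phi^u_\sigma$. First I would rewrite the hypothesis in terms of negative parts: since $\Phi$ is trace preserving, $\Tr[\rho-s\sigma]=\Tr[\Phi(\rho)-s\Phi(\sigma)]=1-s$, so the identity $\Tr[S_-]=\tfrac12(\|S\|_1-\Tr S)$ converts the assumed inequality into
\[
\Tr[(\Phi(\rho)-s\Phi(\sigma))_-]\ge \Tr[(\rho-s\sigma)_-]-\tfrac{\epsilon}{2},\qquad s\ge 0.
\]
Together with the monotonicity $\Tr[(\Phi(\rho)-s\Phi(\sigma))_-]\le \Tr[(\rho-s\sigma)_-]$ recorded after Lemma~\ref{lemma:qnm}, this sandwiches the pointwise gap $g(s):=\Tr[(\rho-s\sigma)_-]-\Tr[(\Phi(\rho)-s\Phi(\sigma))_-]$ in $[0,\tfrac{\epsilon}{2}]$ for every $s\ge 0$.

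Next I would choose $\lambda=e^{D_{\max}(\rho\|\sigma)}$ and $\mu=e^{-D_{\max}(\sigma\|\rho)}$, which are finite because $D_\Omega(\rho\|\sigma)<\infty$ and which satisfy $\mu\sigma\le\rho\le\lambda\sigma$; applying the positive trace-preserving map $\Phi$ gives $\mu\Phi(\sigma)\le\Phi(\rho)\le\lambda\Phi(\sigma)$ as well. Corollary~\ref{coro:integral} applied to both pairs, with the additive constants $\log\lambda+1-\lambda$ cancelling, then yields
\[
D(\rho\|\sigma)-D(\Phi(\rho)\|\Phi(\sigma))=\int_\mu^\lambda \frac{g(s)}{s}\,ds\le \frac{\epsilon}{2}\int_\mu^\lambda\frac{ds}{s}=\frac{\epsilon}{2}\big(\log\lambda-\log\mu\big)=\frac{\epsilon}{2}\,D_\Omega(\rho\|\sigma).
\]
Finally I would insert this into the lower bound $D(\rho\|\sigma)-D(\Phi(\rho)\|\Phi(\sigma))\ge\tfrac14\|\rho-\Phi^u_\sigma\circ\Phi(\rho)\|_1^2$ of Theorem~\ref{thm:recoverability}, whose hypotheses hold since $D_\Omega(\rho\|\sigma)<\infty$ forces $\supp(\rho)=\supp(\sigma)$ and hence $D(\rho\|\sigma)<\infty$; this gives $\|\rho-\Phi^u_\sigma\circ\Phi(\rho)\|_1\le\sqrt{2\epsilon}\,D_\Omega(\rho\|\sigma)^{1/2}$, while $\Phi^u_\sigma\circ\Phi(\sigma)=\sigma$ is part of the same theorem, so $\Lambda=\Phi^u_\sigma$ works.

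Once the ingredients are assembled this is essentially a one-line computation, so there is no deep obstacle; the steps I would check most carefully are the bookkeeping ones. The first is that Corollary~\ref{coro:integral} genuinely applies to the image pair $(\Phi(\rho),\Phi(\sigma))\in\Se(\Ka)$ with the \emph{same} $\mu,\lambda$ — this is exactly the place where only positivity and trace preservation of $\Phi$ are used, not complete positivity — and that the optimal choice of $\mu,\lambda$ makes $\log\lambda-\log\mu$ equal to $D_\Omega(\rho\|\sigma)$ rather than merely an upper bound for it. The conceptual heart of the matter, and the reason the new integral formula is doing the real work, is that it is the only clean way I see to collapse the continuum of hypothesis-testing (equivalently $L_1$) inequalities indexed by $s$ into the single scalar entropy gap that the recoverability theorem consumes; the weight $ds/s$ restricted to the finite window $[\mu,\lambda]$ is precisely what produces the factor $D_\Omega(\rho\|\sigma)$ and explains why the Hilbert projective metric, and not $D_{\max}$ alone, is the natural quantity on the right-hand side.
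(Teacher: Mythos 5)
Your proposal is correct and follows essentially the same route as the paper: rewrite the $L_1$ hypothesis as the negative-part inequality, integrate the pointwise gap over $[\mu,\lambda]=[e^{-D_{\max}(\sigma\|\rho)},e^{D_{\max}(\rho\|\sigma)}]$ via Corollary~\ref{coro:integral} to bound the entropy gap by $\tfrac{\epsilon}{2}D_\Omega(\rho\|\sigma)$, and then invoke Theorem~\ref{thm:recoverability} with $\Lambda=\Phi^u_\sigma$. The only cosmetic difference is that the paper disposes of the case $D_\Omega(\rho\|\sigma)=\infty$ explicitly as trivial, which you leave implicit.
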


\begin{proof} The statement is trivial if $D_\Omega(\rho\|\sigma)=\infty$, so assume that 
 $\mu\sigma\le \rho\le \lambda\sigma$ for $\mu,\lambda>0$,
 $\mu=e^{-D_{\max}(\sigma\|\rho)}$ and $\lambda=e^{D_{\max}(\rho\|\sigma)}$. Then also $\mu \Phi(\sigma)\le
\Phi(\rho)\le \lambda\Phi(\sigma)$. By the assumptions, inequality \eqref{eq:epsilon_minus} 
holds. Using  Corollary \ref{coro:integral}, we get
\begin{align*}
D(\rho\|\sigma)-D(\Phi(\rho)\|\Phi(\sigma))&=\int_\mu^\lambda\frac{ds}{s}\bigl(\Tr[(\rho-s\sigma)_-]-\Tr[(\Phi(\rho)-s\Phi(\sigma))_-]\bigr)\\
&\le \frac{\epsilon}2
\int_\mu^\lambda\frac{1}{s}ds=\frac{\epsilon}2\bigl(\log(\lambda)-\log(\mu)\bigr)=\frac{\epsilon}2D_\Omega(\rho\|\sigma).
\end{align*}
The statement now
follows by Theorem \ref{thm:recoverability}.

\end{proof}

\begin{remark} The recoverability result can be also formulated in the setting of
comparison of statistical experiments, which is an extension of the classical theory of
Blackwell \cite{blackwell1953comparison}, T\"orgersen \cite{torgersen1991comparison} and
Le Cam \cite{lecam1964sufficiency}. A
(quantum)
statistical experiment is any parametrized family of (quantum) states. For two experiments
$\mathcal E$ and $\mathcal E_0$ with the same parameter set (not necessarily living on the
same Hilbert space), 
we say that $\mathcal E_0$ is $(2,\epsilon)$-deficient with respect to $\mathcal E$ if the error
probabilities of testing problems involving elements of $\mathcal E_0$ are up to $\epsilon$ not
worse than those of corresponding testing problems for $\mathcal E$. See  \cite{jencova2012comparison} for a more precise definition. In particular, for
$\mathcal E_0=\{\rho_0,\sigma_0\}$ and $\mathcal E=\{\rho,\sigma\}$, this amounts to the
condition 
\[
P_e(\lambda,\rho_0,\sigma_0)\le P_e(\lambda,\rho,\sigma)+\epsilon,\qquad \forall
\lambda\in [0,1].
\]
Using Lemma \ref{lemma:qnm}, we see that this is equivalent to any of the inequalities
\begin{align*}
\|\rho_0-s\sigma_0\|_1&\ge \|\rho-s\sigma\|_1-2\epsilon(1+s),\qquad s\ge 0\\
\Tr[(\rho_0-s\sigma_0)_{+}]&\ge \Tr[(\rho-s\sigma)_{+}]-\epsilon(1+s),\qquad s\ge
0\\
\Tr[(\rho_0-s\sigma_0)_{-}]&\ge \Tr[(\rho-s\sigma)_{-}]-\epsilon(1+s), \qquad s\ge 0.
\end{align*}
It is easily seen that this is true if there is some channel $\Lambda$ such that
$\|\Lambda(\rho_0)-\rho\|_1\le\epsilon$ and $\|\Lambda(\sigma_0)-\sigma\|_1\le \epsilon$.
In the classical case the converse holds, but  in the quantum case this is not true. 
We can obtain some form of the converse statement if $\rho_0=\Phi(\rho)$ and
$\sigma_0=\Phi(\sigma)$, in a similar way as in Theorem \ref{thm:recoverability_e}.

\end{remark}

\section*{Acknowledgement}

I am grateful to Mark Wilde and Fumio Hiai for their comments on an earlier version of the
manuscript.
The research was supported by the grant VEGA 1/0142/20 and  the Slovak
Research and Development Agency grant APVV-20-0069.

\appendix

\section{Appendix: Proof of Theorem \ref{thm:petz}}

The implication (i) $\implies$ (ii) follows by the data processing inequality
\eqref{eq:dpi}, (ii) $\implies$ (iii) (with $t=0$) was proved in
\cite{jencova2006sufficiency, petz1988sufficiency}. The implications (iii) $\implies$ (i)
and (iv) $\implies$ (v) $\implies$ (i) are easy, so the only thing left to prove is (i)
$\implies$ (iv). 

By the assumption $D(\rho\|\sigma)<\infty$, we have that $\supp(\rho)\le \supp(\sigma)$
and also $\supp(\Phi(\rho))\le \supp(\Phi(\sigma)$. We may therefore assume that both
$\sigma$ and $\Phi(\sigma)$ are faithful, by restriction to the respective supports.

With this assumption, we will also need to recall some further properties of sufficient channels from
\cite{petz1988sufficiency,jencova2006sufficiency}. Let us denote by $u_t$ and $v_t$ the Connes cocycles
\[
u_t=\rho^{it}\sigma^{-it},\quad v_t=\Phi(\rho)^{it}\Phi(\sigma)^{-it},\qquad t\in \mathbb
R,
\]
then $u_t$ and $v_t$ are one-parameter families of isometries satisfying the conditions
\[
\sigma^{is}u_t\sigma^{-is}=u_s^*u_{t+s},\quad
\Phi(\sigma)^{is}v_t\Phi(\sigma)^{-is}=v_s^*v_{t+s},\qquad s,t\in \mathbb R.
\]
By the results of \cite{petz1988sufficiency,jencova2006sufficiency}, we can see that if
$\Phi$ is sufficient with respect to $\{\rho,\sigma\}$, then $u_t$ is in the
multiplicative domain of the unital completely positive map $\Phi_\sigma^*$ and similarly $v_t$ is in the multiplicative domain of
$\Phi^*$, see \cite[Theorem 3.18]{paulsen2002completely} for the definition and properties of
multiplicative domains. Moreover, $\Phi^*(v_t)=u_t$ and $\Phi_\sigma^*(u_t)=v_t$. It
follows that for any $s,t\in \mathbb R$,
\[
\Phi^*(\Phi(\sigma)^{it}v_s\Phi(\sigma)^{-it})=\Phi^*(v_t^*v_{s+t})=u_t^*u_{s+t}=\sigma^{it}u_s\sigma^{-it}
\]
and similarly
$\Phi_\sigma^*(\sigma^{it}u_s\sigma^{-it})=\Phi(\sigma)^{it}v_s\Phi(\sigma)^{-it}$. 
Let now $A\in B(\Ha)$, we have for all $s,t\in \mathbb R$,
\[
\<\sigma^{it}u_sA^*\sigma^{-it},\sigma^{it}u_s\sigma^{-it}\>_\sigma=\<u_sA^*,u_s\>_\sigma=\Tr[A\rho^{is}\sigma^{1/2-is}\rho^{is}_s\sigma^{1/2-is}]
\]
and the  analytic continuation to $s=-\tfrac12i$ of the last expression becomes
$\Tr[A\rho]$. On the other hand, using the above properties of the cocycles and of
multiplicative domains, we get by the definition of the Petz dual
\begin{align*}
\<\sigma^{it}u_sA^*\sigma^{-it},\sigma^{it}u_s\sigma^{-it}\>_\sigma&=\<\sigma^{it}u_sA^*\sigma^{-it},\Phi^*(\Phi(\sigma)^{it}v_s\Phi(\sigma)^{-it})\>_\sigma\\
&=
\<\Phi_\sigma^*(\sigma^{it}u_sA^*\sigma^{-it}),\Phi(\sigma)^{it}v_s\Phi(\sigma)^{-it}\>_{\Phi(\sigma)}\\
&=\<\Phi(\sigma)^{it}v_s\Phi(\sigma)^{-it}\Phi_\sigma^*(\sigma^{it}A^*\sigma^{-it}),\Phi(\sigma)^{it}v_s\Phi(\sigma)^{-it}\>_{\Phi(\sigma)}\\
&=\Tr[\Phi^*_{t,\sigma}(A)\Phi(\rho)^{is}\Phi(\sigma)^{1/2-is}\Phi(\rho)^{is}\Phi(\sigma)^{1/2-is}],
\end{align*}
here the analytic continuation to $s=-\tfrac12i$ equals 
\[
\Tr[\Phi^*_{t,\sigma}(A)\Phi(\rho)]=\Tr[A\Phi_{t,\sigma}(\Phi(\rho))].
\]
Since $A$ and $t$ were arbitrary, this finishes the proof of (i) $\implies$ (iv).


\end{document}